\DeclareMathOperator{\tr}{Tr}
\newcommand{\id}{\mathbb{1}}
\newcommand{\dens}[1]{|#1\rangle\langle#1|}
\newcommand{\ket}[1]{|#1\rangle}
\newcommand{\bra}[1]{\langle#1|}
\newcommand{\mc}[1]{\mathcal{#1}}
\newcommand{\md}[1]{\mathds{#1}}
\newcommand{\ct}{^\dagger}
\newtheorem{lemma}{Lemma}
\newcounter{notecounter}
\newcommand{\LAB}{\ensuremath{\Lambda_{A\!\rightarrow \!B}}}
\newcommand{\LBA}{\ensuremath{\Lambda_{B\!\rightarrow \!A}}}
\begin{document}
\author{Jonas Helsen}
\affiliation{Qusoft \& Korteweg-de Vries Instituut, University of Amsterdam, Science Park 123 1098 XG Amsterdam, The Netherlands}
\author{Stephanie Wehner}
\affiliation{QuTech, Delft University of Technology, Lorentzweg 1, 2628 CJ Delft, The Netherlands}
\affiliation{Kavli Institute of Nanoscience, Delft University of Technology, Lorentzweg 1, 2628 CJ Delft, The Netherlands}
\title{A benchmarking procedure for quantum networks}

\date{\today}
\begin{abstract}
\noindent We propose network benchmarking: a procedure to efficiently benchmark the quality of a quantum network link connecting quantum processors in a quantum network. This procedure is based on the standard randomized benchmarking protocol and provides an estimate for the fidelity of a quantum network link. We provide statistical analysis of the protocol as well as a simulated implementation inspired by NV-center systems using Netsquid, a special purpose simulator for noisy quantum networks.
\end{abstract}
\maketitle
\section{Introduction}\label{sec:introduction}
Quantum technology research can be broadly categorized into two strands: on the one hand the development of large-scale fault-tolerant quantum computers, and on the other hand the development of quantum networks that link quantum computers together and allow for quantum communication based tasks (such as clock synchronization \cite{jozsa2000quantum}, anonymous communication~\cite{christandl2005quantum}, and cryptography~\cite{bennett1992experimental,ekert1991quantum}), culminating in a Quantum Internet~\cite{wehner2018quantum} connecting quantum processing nodes. These nodes, for which physical platforms such as NV-centers in diamond~\cite{reiserer2016robust}, ion traps~\cite{inlek2017multispecies}, and neutral atoms~\cite{welte2018photon} are currently being developed, posses quantum computing capacity, leading to the possibility of distributed or networked quantum computing~\cite{spiller2006quantum}.

One of the major step changes in the development of quantum computers in the last decade has been the development of practical methods to characterize the quality of quantum operations, allowing experimentalists to quickly diagnose and improve a critical building block of a fault-tolerant quantum computing architecture (see e.g. \cite{eisert2020quantum} and references therein). 
In this work we consider the corresponding problem of the characterization of quantum communication links, a key feature of quantum networks that has no real counterpart in quantum computation. Several methods exist to assess the quality of a quantum network link which we briefly review. For entanglement based networks, i.e. networks where the quantum network link is established through entangled states between nodes, any characterization of the quality of entanglement translates in principle to a quality measure of the network link. Many methods exist to assess the quality of entanglement (see e.g. \cite{brunner2014bell,vsupic2020self} for work on Bell inequalities and self-testing, and \cite{d20048} for quantum state tomography), which can be mapped to quality-assessment methods for quantum network links.  Similarly, for direct transmission based network links (abstractly modeled by a quantum channel), we may, from the transmission of qubit states in two distinct bases (typically the $X$ and $Z$ bases) make an inference about how well any state, or indeed entanglement, may be transmitted (see e.g.\cite{barnum2000quantum,buscemi2010quantum,chuang1997prescription}). More generally, a procedure is known to estimate the capacity of a quantum channel \cite{pfister2018capacity}). Finally, \cite{lipinska2020certification} gives a method certify whether a quantum network of nodes connected by quantum links has attained a specific stage of development.



In this work we aim to add to this toolbox by proposing network benchmarking: a procedure to assess the quality of transmission between quantum processing nodes in a quantum network in the so-called \emph{quantum memory network stage} and above~\cite{wehner2018quantum}, by yielding an estimate of the average fidelity of the effective quantum channel modeling a quantum network link. 
Network benchmarking is adapted from the randomized benchmarking protocol~\cite{PhysRevA.75.022314,knill2008randomized}, a gold-standard methodology for the characterization of quantum operations in quantum computers. It is lightweight, easy to implement, and inherits many of the robustness properties enjoyed by the original randomized benchmarking protocol. We also give a more general multi-node protocol that can be used to characterize the fidelity of a path of multiple nodes connected by quantum communication links, and can thus be seen as the quantum analogon to the classical 'ping' operation.

\subsection{Results}
We introduce network benchmarking, a method that robustly and efficiently yields an estimate of the quality of a quantum network link. We propose two version of this protocol: a $2$-node protocol that estimates the quality of a link between two quantum network nodes, and a multi-node protocol that estimates the quality of a path over several nodes in a network. We provide a theoretical analysis of these protocols, arguing that they estimate the average fidelity of the quantum channel modeling a quantum link. For network links implemented by noisy quantum teleportation we prove that this network fidelity can be related to the \emph{average fidelity}, a standard metric of quality of quantum processes.\\

We supplement this theoretical work with numerical simulations using the quantum network simulator Netsquid~\cite{netsquid}. By testing several realistic scenarios we can argue that network benchmarking performs well under realistic conditions (e.g. noise, timing, circuit decompositions), efficiently yielding accurate estimates of the network fidelity of a network link. The code for these simulations can be found at~\cite{zenodolink}.\\

Finally we analyze the statistical requirements of network benchmarking, with a particular focus on the number of times the quantum communication link must be used to get a good estimate of the average fidelity.\\ 

In \cref{sec:prelims} we elaborate upon a model of a quantum network and recall aspects of the quantum channel formalism for noisy quantum operations. In \cref{sec:net-bench} we  introduce the network benchmarking protocol, in its $2$-node and multi-node versions, and in \cref{sec:fidelity} we connect the data it generates to the average fidelity. In \cref{sec-sim} we present results from numerical simulations of the network benchmarking protocol using the NetSquid simulation package for quantum networks. In \cref{sec:stat-anal} we discuss the statistics of network benchmarking.

\begin{figure*}[t]
\includegraphics[width=2\columnwidth]{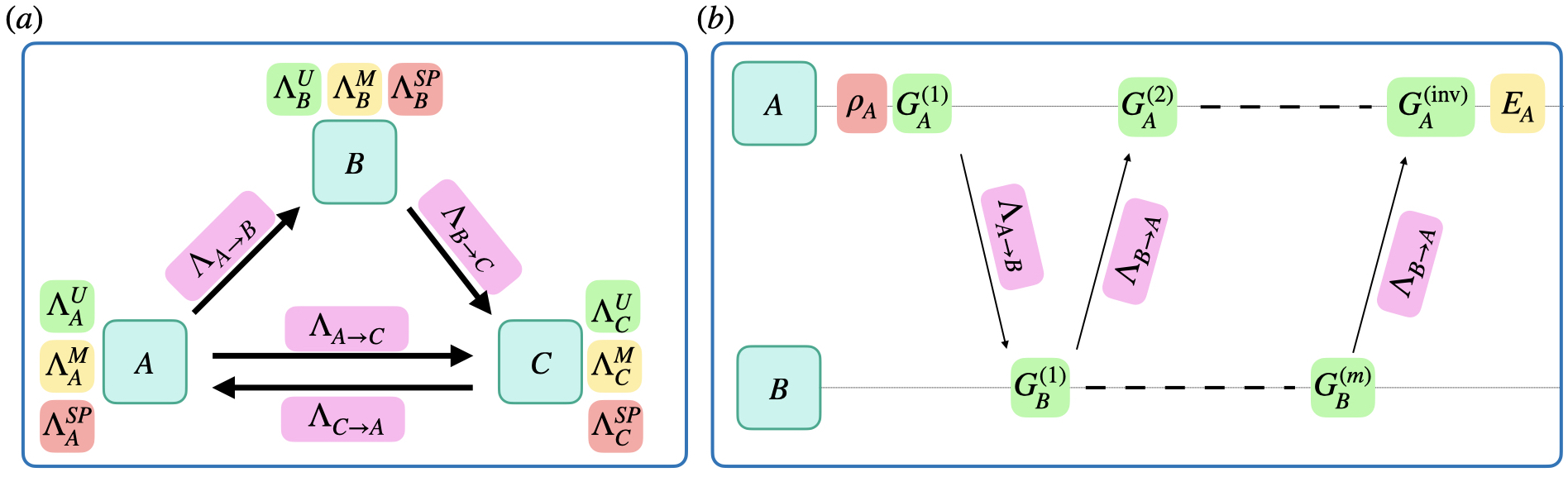}
\caption{{\bf (a)} An example network with three nodes $A,B,C$ presented together with all associated quantum channels modeling state preparation noise ($\Lambda^{SP}$), measurement noise $\Lambda^M$, operation noise ($\Lambda^U$), and modeling noise in the network link connecting the node ( e.g. $\LAB$ for the link $A
\to B$). {\bf (b)} Graphical description of (a single run of $m$ bounces of) the $2$-node network benchmarking protocol, with time running rightwards. Boxes indicate actions taken at nodes $A,B$ and colors (color online) associated to each box indicate what noise process (see (a)) affects these actions. See \cref{fig:protocol2nodes} for a detailed description of the $2$-node network benchmarking protocol. }\label{fig:node_prot}
\end{figure*}
\section{Preliminaries}\label{sec:prelims}
In this section we will introduce the necessary mathematical machinery to state our main results. We will discuss our model of a quantum network, as well as tools for modeling noise in quantum operations.
\subsection{Network model}
We consider an abstract model of a quantum network, consisting of nodes and connections between those nodes. We will label the nodes by capital letters ($A,B,..$) and denote the connection between two nodes by a directional arrow ( $A\to B$). These nodes and connections are abstractions of the physical components in the network. Within the framework of \cite{wehner2018quantum} we will assume that our nodes have the following functionalities: \\

(1) The ability to store quantum states in memory (stage $4$ in \cite{wehner2018quantum}). We model this by associating a memory register $\mc{H}_A$ to each node. We will assume this register can be initialized in some fixed initial state $\rho_A$ and read out by measurement in a POVM $\{E_A^{(i)}\}_{i\in I}$ with $I$ being some index set labeling the possible measurement outcomes. An example of an initial state is the all-zero $\ket{0\ldots 0}$ state and an example of a POVM measurement is the standard computational basis measurement.\\

(2) The ability to perform quantum operations on stored quantum states (stage $5$ in \cite{wehner2018quantum}). We will model this by allowing the application of quantum gates $U$ from a gateset $\md{G}$. Ideally this gateset is universal, meaning that any unitary operation can be implemented on the quantum processing node by sequences of unitaries from $\md{G}$, but we will only need a weaker property to perform network benchmarking, as we shall see in \cref{sec:net-bench}.\\

(3) The ability to transmit quantum states from a node $A$ to a node $B$ (stage $3$ in \cite{wehner2018quantum}). This can be implemented in various ways in the underlying hardware, such as through teleportation using entanglement, but we will consider it as an abstract functionality here. It is this ability that is predominantly tested by network benchmarking. 

\subsection{Noise and average fidelity}
The appearance of noise in quantum devices is typically modeled by quantum channels. These are superoperators (linear maps that send matrices to matrices), that preserve physicality, i.e. they map quantum states to quantum states. For an extended introduction to quantum channel see \cite{nielsen2002quantum}. We will denote quantum channels by $\Lambda$ and denote the action of a quantum channel on a state $\rho$ as $\Lambda(\rho)$. We will use superscripts to indicate the function of a quantum channel and subscripts to denote the node to which they are associated. For instance, we will model the noise associated to state preparation in node $A$ by $\Lambda^{SP}_{A}$ and the noise associated to measurement by $\Lambda^{M}_{A}$. We will also associate to each quantum operation $U$ (which we think of a as a superoperator $U(\rho) = U\rho U\ct$, abusing notation somewhat) implemented on a node $A$ a quantum channel $\Lambda_A^U$ modeling the noise associated to the operation $U$. This means that if node $A$ is instructed to apply $U$ to a state $\rho$ (yielding $(U(\rho) = U\rho U\ct)$) it actually outputs $(\Lambda^U_AU)(\rho) = \Lambda_A^U(U\rho U\ct)$. Note that we have made no mention of the physical mechanism by which $\Lambda_A^U$ arises, it is abstracted away. Finally we associate to the quantum transmission link $A\to B$ the quantum channel $\Lambda_{A\to B}$, modeling the noise incurred by a state in the process of transmission between $A$ and $B$. Note that in general we do not assume that $\LAB = \LBA$, although we will see examples where this is the case. We have included an illustration of the network model and associated noise maps in \cref{fig:node_prot}(a). \\

Upon modeling (the act of transmission through) a network link $A\to B$ with a quantum channel, we can ask how we can quantify the quality of this network link. This question is equivalent to asking how well the quantum channel $\LAB$ approximates the identity channel. One of the most common ways of quantifying this approximation is the average fidelity $F(\Lambda)$ of a quantum channel $\Lambda$, which is defined as
\begin{equation}\label{eq:av_fid}
F(\Lambda) = \int d\psi \tr \big[\Lambda\big(\dens{\psi}\big)\dens{\psi}\big],
\end{equation}
where the integral is taken uniformly over all pure quantum states $\ket{\psi}$. One can interpret this quantity as measuring how much a generic quantum state changes when $\Lambda$ acts on it, or equivalently  as capturing the average behavior of $\Lambda$. The average fidelity is a standard metric used in reporting the quality of quantum operations in quantum computers. The goal of network benchmarking is to estimate quantities like $F(\LAB)$, the average fidelity of (the quantum channel modeling) a network link $A\to B$. \\

Finally we note that the average fidelity is closely related to another quantity~\cite{nielsen2002simple}, which we call the depolarizing fidelity $f(\Lambda)$. These two quantities are related as
\begin{equation}\label{eq:fid_to_dep}
F(\Lambda) = \frac{(d-1)f(\Lambda) +1}{d},
\end{equation}
where $d$ is the dimension of the underlying state space. The depolarizing fidelity does not have the clean operational interpretation of the average fidelity, but it will show up more naturally in the calculations below. Next we move on to defining and analyzing the network benchmarking protocol.

\section{Network benchmarking}\label{sec:net-bench} 
In this section we introduce network benchmarking. We will describe two versions of this protocol, a $2$-node protocol and a more general multi-node protocol. Network benchmarking can be seen as an adaption of the randomized benchmarking protocol \cite{knill2008randomized,PhysRevA.75.022314,helsen2019multiqubit} for quantum networks, and will share many of its characteristics and theoretical analysis.\\

Consider two separated nodes $A$ and $B$ connected by a quantum network links $A\rightarrow B$ and $B\to A$, with associated quantum channels $\LAB,\LBA$. The goal of $2$-node network benchmarking is to estimate the average fidelities $F(\LAB)$ and $F(\LBA)$. However we desire that the procedure estimating these quantities satisfies several properties. The first property is \emph{efficiency}: we demand that the estimation procedure is light on resource use (measured in the number of times a network link is used), and independent of the capacity of the network link. By this we mean that we want to be able to estimate the fidelity of links sending many-qubit states in parallel without an exponential explosion in resource use. The second property is \emph{resistance to state preparation and measurement errors}. We will not assume that the initialization of states and the measuring of POVMs in nodes is perfect, and we demand that that estimation procedure output the correct result even when state preparation and measurement (SPAM) is imperfect. Ideally we would also like to demand independence of noise in quantum operations performed locally, but this is not possible. However, given that gate fidelities are typically much higher than state preparation and measurement fidelities in many physical platforms for networks nodes demanding only SPAM-robustness is a reasonable compromise. \\

Network benchmarking is not a device independent protocol, and in order to guarantee that it outputs an estimate of the fidelity of the quantum network link we have to make several assumptions on the behaviour of the nodes and network link. These assumptions are essentially the same as those of standard randomized benchmarking, see \cite{helsen2020general} for a general discussion of these assumptions. The central assumption we make is that of Markovianity: we will assume that the noise in the network link is always modeled by the same quantum channel, independent of the history of its use. We will similarly assume that the noise on state preparation $\rho_A$, measurements $\{E_A^{(i)}\}_{i\in I}$, and quantum gates $U_A$, have noise models that only depend on the node $A$ (and not on external variables like time, history, etc..). Note that this assumption of Markovianity was already implicit in our earlier description of the network model.\\

We will also assume that the quantum gates $U_A$ has a so-called gate-independent noise model. This means we assume that there exists a quantum channel $\Lambda_A$ such that for all gates $U\in \md{G}$ the implementation of $U$ is given by $\Lambda_A(U\rho U\ct))$. We stress however that this merely a technical assumption -standard in the randomized benchmarking literature-, adopted to make the proof of correctness of network benchmarking easier to understand. It can be removed at the cost of a considerable increase in mathematical complexity, see \cite{helsen2020general} for a general treatment.

\subsection{2-node network benchmarking}\label{subsec:protocol}
The $2$-node network benchmarking protocol involves two nodes $A$ and $B$ connected by links $A\to B$ and $B\to A$. This protocol produces an estimation of the (geometric) mean quality of the quantum channels $\LAB, \LBA$ associated to the links.
A formal specification of the 2-node network benchmarking protocol is given in \ref{fig:protocol2nodes}. An illustration of the steps of the protocol can also be found in \cref{fig:node_prot}(b). Here we give a more intuitive explanation the steps taken.

The protocol begins with the initialization of a state $\rho_A$ at node $A$. To this state a quantum operation $G_A^{(1)}$ is applied and the resulting state is then sent (through $\LAB$) to node $B$. Upon arrival at $B$ another quantum operation $G_B^{(1)}$ is applied and the state is sent back to node $A$ (through $\LBA$). The quantum operations $G_A^{(1)},G_B^{(1)}$ must be chosen at random from a sufficiently large set of quantum operations $\md{G}$. By sufficiently large we mean that the set must be at least a unitary $2$-design. A common choice for such a set is the multi-qubit Clifford group $\mathbb{C}$~\cite{helsen2019multiqubit}, which is also appropriate here. We will refer to the above sequence of ``random operation at $A$ - send to $B$- random operation at $B$- send to $A''$ as a \emph{bounce}. The protocol proceeds by performing such a bounce $m$ times, where $m$ is some pre-specified integer. After these $m$ bounces a final operation $G^{(\mathrm{inv})}_A$ is applied at node $A$ after which the state is read out by a two-component POVM $\{E,\id-E\}$. This operation $G^{(\mathrm{inv})}_A$ is not chosen at random but is instead the inverse of the product of all preceding gates, plus some extra ending gate $P_A$, in symbols 
\begin{equation}
G^{(\mathrm{inv})}_A = P_A\left(\prod_{i=1}^m  G^{(i)}_B G^{(i)}_A\right)^{\dagger}.
\end{equation}
This means that if, hypothetically, all gates and state transfer operations are perfectly noise-free the overall operation applied to the initial state $\rho_A$ is the ending gate $P_A$. This ending gate must again be chosen at random, but this time from a restricted gate set of two operations: $P_A\in \{\id, P\}$ where $P$ is a unitary that sends $\rho_A$ to a state orthogonal to $\rho_A$. If $\rho_A$ is the all-zero state a good choice for $P$ would be the all-qubit Pauli $X$-gate. Upon measurement a binary outcome $b$ is produced, which is negated depending on whether $P_A$ is $\id$ or $P$. This is a post-processing trick originally proposed in \cite{helsen2019multiqubit}, making the processing of this output data easier (we will explain this in more detail in \cref{sec:fidelity}). The procedure outlined above must then be repeated for many different random choices of operations, to estimate the average outcome $b_m = \md{E}(b)$. Finally the integer $m$ must be varied, yielding a set of data $\{b_m\}_{m\in \md{M}}$ where $\md{M}$ is some list of integers.

\begin{algorithm}[H]
  \caption{The 2-node network benchmarking protocol}\label{fig:protocol2nodes}
  \begin{algorithmic}[1]
    \For{$m \in \md{M}$}
      \For{$n_m$ from $1$ to $N_m$}
      \State Prepare a state $\rho_A$ at node $A$
        \For{$i$ from $1$ to $m$}
          \State Apply a random gate $G^{(i)}_A$ to $\rho_A$
          \State Transfer $\rho_A$ to node $B$ using $\LAB$
          \State Apply a random gate $G^{(i)}_B$ to $\rho_A$
          \State Transfer $\rho_A$ to node $A$ using $\LBA$
        \EndFor
        \State Choose $P_A$ randomly from the set $\{\id, P\}$
        \State Apply $G^{(\mathrm{inv})}_A = P_A\!\!\left(\!\prod_{i=1}^m  G^{(i)}_B G^{(i)}_A\!\right)^{\dagger}$ to $\rho_A$.
        \State Measure the state $\rho_A$ using the POVM $\{E, \id -E\}$\\ $\hspace{2.7em}$ and record the outcome $b_{n_m}\in \{0,1 \}$
        \If{$P_A$ is equal to $P$}
          \State Set $b_{m_n}$ to  $-b_{m_n}$
        \EndIf
      \EndFor
      \State Compute the mean outcome
      \begin{equation}
        b_m = \frac{1}{N_m}\sum_{n_m=1}^{N_m} {b_{n_m}}
      \end{equation}
    \EndFor
    \State Output the list $\{b_m\}_{m\in \md{M}}$
  \end{algorithmic}
\end{algorithm}

As we will argue in the next section, the output data $\{b_m\}_{m\in \md{M}}$ can be fitted to a single exponential 
\begin{equation}
b_m =_{\mathrm{fit}} Af^m
\end{equation}
where $A$ depends on state preparation and measurement (SPAM) errors and $f$ only depends on the noise incurred by the application of local gates and the channels $\LBA$, $\LAB$. We can extract the quantity $f$ by performing a least-squares fit on the data $\{b_m\}_{m\in \md{M}}$. We will call the quantity $f$ the network link fidelity (associated to node $A,B$). In the next section we will see that, under the assumption that the unitary operations at each node have noise that is the same for each operation, i.e. that there exist quantum channels $\Lambda_A,\Lambda_B$ such that $\Lambda_A^G =\Lambda_A$ and $\Lambda_B^G =\Lambda_B$ for each gate $G$, the network link fidelity $f$ can be written as $f = f(\LAB \Lambda_A)f(\LBA \Lambda_B)$, with $f(\Lambda)$ the depolarizing fidelity (as defined in \cref{eq:fid_to_dep}). This means that $f$ is related to the product of the depolarizing fidelities of $\LAB$ and $\LBA$, but also depends on the local gate noise channels $\Lambda_A,\Lambda_B$.  In practice the local gates will have high fidelity relative to the communication links, so the network fidelity $f$ will be dominated by the channels $\LAB,\LBA$.

\subsection{Multi-node network benchmarking}
The above protocol can be generalized to quantify the fidelity of a connected path of network nodes. This provides a quantum version of the classical `ping' command, and could prove useful in day to day network operation. Consider nodes $A_1, \ldots,A_K$ that are connected by quantum channels $\Lambda_{A_i\rightarrow A_{i+1}}$ and $\Lambda_{A_{i+1}i\rightarrow A_{i}}$ for $i\in \{1,\ldots K\!-\!1\}$. The multi-node network benchmarking protocol works by sending a state from $A_1$ to $A_K$ (along $A_2, A_3,...$) and then back to $A_1$ with a random gate applied to this state at each intermediate node. By performing this multi-node bounce several times one can extract an estimate of the fidelity of the composite link connecting $A_1$ and $A_K$. The protocol is specified in \cref{fig:protocolmultinodes}.
\begin{algorithm}[H]
  \caption{The multi-node network benchmarking protocol}\label{fig:protocolmultinodes}
  \begin{algorithmic}[1]
    \For{$m \in \md{M}$}
      \For{$n_m$ from $1$ to $N_m$}
      \State Prepare a state $\rho_A$ at node $A$
        \For{$i$ from $1$ to $m$}
          \For{$k$ from $1$ to $K-1$}
            \State Apply a random gate $G^{(i,1)}_{A_k}$ to $\rho_A$
            \State Transfer $\rho_A$ to node $A_{k+1}$ using $\Lambda_{A_k\rightarrow A_{k+1}}$
          \EndFor
          \For{$k$ from $K-1$ to $1$}
            \State Apply a random gate $G^{(i,2)}_{A_{k+1}}$ to $\rho_A$
            \State Transfer $\rho_A$ to node $A_{k}$ using $\Lambda_{A_k\rightarrow A_{k+1}}$
          \EndFor
        \EndFor
        \State Choose $P_A$ randomly from the set $\{\id, P\}$
        \State Apply the inverse\\\hspace{1em} $G^{(\mathrm{inv})}_{A_1} = P_A\left(\prod_{i=1}^m \prod_{k=K}^{1} G^{(i,2)}_{A_k}\prod_{k=1}^{K-1} G^{(i,1)}_{A_k}\right)^{\dagger}$ to $\rho_A$.
        \State Measure the state $\rho_A$ using the POVM $\{E, \id -E\}$\\ $\hspace{2.7em}$ and record the outcome $b_{n_m}\in \{0,1 \}$
        \If{$P_A$ is equal to $P$}
          \State Set $b_{m_n}$ to  $-b_{m_n}$
        \EndIf
      \EndFor
      \State Compute the mean outcome
      \begin{equation}
        b_m = \frac{1}{N_m}\sum_{n_m=1}^{N_m} {b_{n_m}}
      \end{equation}
    \EndFor
    \State Output the list $\{b_m\}_{m\in \md{M}}$
  \end{algorithmic}
\end{algorithm}
As in the $2$-node case, the output data $\{b_m\}_{m\in \md{M}}$ can be fitted to a single exponential 
\begin{equation}
b_m =_{\mathrm{fit}} A_{\mathrm{SPAM}}f^m
\end{equation}
Again assuming gate-independent noise for the local gates at each node, we can see (in an argument identical to that in the $2$-node case) that $f$ will be given by
\begin{equation}
f = \prod_{k=1}^{K-1} f(\Lambda_{A_k\rightarrow A_{k+1}} \Lambda_{A_k}) f(\Lambda_{A_{k+1}\rightarrow A_{k}} \Lambda_{A_{k+1}}).
\end{equation}
In other words the output of the multi-node network benchmarking protocol is given by the product of the depolarizing fidelities of all intermediate communication links (up to local noise channels). We will refer to $f$ as the network path fidelity (associated to the path $A_1, \ldots ,A_K$).

\section{Network fidelity}\label{sec:fidelity}
In this section we will argue that the 2-node network benchmarking protocol proposed in \cref{fig:protocol2nodes} yield an output related to the product of the fidelities of the maps $\LAB$ and $\LBA$. This argument will easily generalize to the multi-node case. We will for this section assume that the network obeys the property of gate-independent noise. This means we assume that a gate $G_A$ acts as $\Lambda^G_A(G_A\rho_AG_A\ct)$ and $\Lambda^G_B(G_B\rho_BG_B\ct)$ for all $G \in \md{G}$. The arguments given here are closely related to those for randomized benchmarking \cite{PhysRevA.75.022314,knill2008randomized} and subsequently the assumptions we make can be relaxed significantly by adapting the more modern treatments of randomized benchmarking~\cite{IndependentNoise,Merkel18,helsen2020general} to the network benchmarking setting, but we will not pursue this here. \\

Consider the average outcome $b_m$ of an $m$-bounce sub-protocol (for some $m \in \md{M}$), as given in \cref{fig:protocol2nodes}. This average outcome can be written out as
\begin{align}\label{eq:av_prob}
  b_m &= \mathlarger{\mathlarger{\md{E}}} \bigg(\!\tr\Big[\Lambda_A^{M}(E)\Big[\Lambda_A[\id - P]\Big(\prod_{i=1}^m  G^{(i)}_B G^{(i)}_A\Big)^{\dagger} \notag\\&\hspace{5em} \times\LBA \Lambda_B G^{(m)}_B\LAB \cdots \Lambda_B \notag\\&\hspace{9em}\times G^{(1)}_B \LBA  \Lambda_A G^{(1)}_A \big](\rho)\Big]\!\bigg),
\end{align}
where the average is taken independently over $G_A^{(1)},\ldots, G_B^{(m)}$.
We can rewrite this quantity into something more manageable. Note first that, by linearity and independence, we can move the average over $G_B^{(m)}$ into the trace. Here we recognize the twirl operator
\begin{equation}
T(\LAB\Lambda_A) = \frac{1}{|\md{G}|}\sum_{G_B^{(m)}\in \md{G}} {G_B^{(m)}}\ct \LAB\Lambda_A G_B^{(m)}
\end{equation}
Now we can use the fact that $\md{G}$ is a $2$-design to conclude that this twirl operator $T(\LAB\Lambda_A)$ is a depolarizing channel with depolarizing fidelity $f (\LAB \Lambda_A)$~\cite{nielsen2002simple}. Further using the fact that a depolarizing channel commutes with unitary operations we can perform this same trick for for the remaining random gates $G_A^{(1)},\ldots,G_B^{(m-1)},G_B^{(m)}$ to obtain 
\begin{align}
  b_m &= \tr \bigg(\Lambda_A^{M}(E)\left[T(\LAB\Lambda_A)T(\LBA\Lambda_B)\right]^m\notag\\&\hspace{7.5em}(\Lambda_A^{SP}(\rho_A) -P\Lambda_A^{SP}(\rho_A)P\ct )\bigg).
\end{align}
Next we note that $\tr(\Lambda_A^{SP}(\rho_A) -P\Lambda_A^{SP}(\rho_A)P\ct ) = 0$ (by cyclicity of the trace). Together with the fact that $[T(\LAB\Lambda_A)T(\LBA\Lambda_B)]^m$ is a depolarizing channel with depolarizing fidelity $[f (\LAB \Lambda_A)f (\LBA \Lambda_B)]^m$ this allows us to conclude that
\begin{equation*}
  b_m \!= \!\tr\!\big[(E\!-\!P\ct \!EP)\mc{E}_A(\rho)\big]\!\big[f(\LAB\Lambda_A)f(\LBA \Lambda_B)\big]^m.
\end{equation*}
Hence we can obtain an estimate of the product $f (\LAB \Lambda_A)f (\LBA \Lambda_B)$ by fitting the data $\{b_m\}_{m\in \md{M}}$ to the formula
\begin{equation}
 b_m =_{\mathrm{fit}} A_{\mathrm{SPAM}}f^{m}.
\end{equation}
We can make a similar argument for the multi-node protocol, where we conclude that the average data $\{b_m\}_{m\in \md{M}}$ can be described as
\begin{equation*}
  b_m =_{\mathrm{fit}} A_{\mathrm{SPAM}}\big[f_{A_1A_2}\ldots f_{A_{K-1}A_K} f_{A_KA_{k-1}} \ldots f_{A_2A_1}\big]^{m},
\end{equation*}
with $f_{A_{i-1}A_i} = f(\Lambda_{A_{i-1} \rightarrow A_i}\Lambda_{A_i-1})$ where $\Lambda_{A_i-1}$ is the quantum channel modeling (gate-independent) local noise in the node $A_i$, and similarly for $f_{A_{i}A_{i-1}}$.\\

\subsection{Symmetric fidelity and teleportation}
The 2-node network benchmarking protocol gives an estimate of the product of the depolarizing fidelities of the channels $\LAB$ and $\LBA$ modeling the links between node $A$ and node $B$ (up to local operation noise). However, in some relevant cases the channels $\LBA$ and $\LAB$ have equal average fidelity (and thus depolarizing fidelity), in which case this average fidelity is directly accessible through network benchmarking. Here we discuss one important case where this is true, namely when the channels $\LAB$ and $\LBA$ are implemented through the quantum teleportation protocol using some pre-prepared entangled state $\rho_{AB}$ (note that this is not necessarily a perfect maximally entangled state) between nodes $A$ and $B$. Concretely we will prove the following lemma:

\begin{lemma}
Let $\LAB$ be the quantum channel implemented by teleportation using a state $\rho_{AB}$ as a resource, and let $\LBA$ be the quantum channel implemented by teleportation using a state $\rho_{BA}$ as a resource. If the local operations used in the teleportation process are noiseless, then $F(\LAB) = F(\LBA)$.
\end{lemma}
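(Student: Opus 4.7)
The plan is to reduce the claim to a single, swap-symmetric scalar characterising the shared resource: the \emph{singlet fidelity} $\bra{\Phi^+}\rho\ket{\Phi^+}$. The key tool is the standard affine Horodecki--Nielsen identity relating the average fidelity to the entanglement fidelity,
\begin{equation*}
F(\Lambda) = \frac{d\, F_e(\Lambda) + 1}{d+1}, \qquad F_e(\Lambda) = \bra{\Phi^+}(\Lambda \otimes \id)(\dens{\Phi^+})\ket{\Phi^+},
\end{equation*}
which holds for any channel $\Lambda$. Consequently it suffices to show $F_e(\LAB) = F_e(\LBA)$.

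The main step is to prove the well-known identity $F_e(\LAB) = \bra{\Phi^+}\rho_{AB}\ket{\Phi^+}$ for the noiseless teleportation channel with resource $\rho_{AB}$. I would write out the Kraus-like form
\begin{equation*}
\LAB(\sigma) = \sum_i V_i\ct\, \tr_{SA}\!\big[(\dens{\Phi_i})_{SA}(\sigma_S \otimes \rho_{AB})\big]\,V_i,
\end{equation*}
with $\ket{\Phi_i} = (V_i \otimes \id)\ket{\Phi^+}$ the generalised Bell states on the input register $S$ and Alice's half $A$ of the resource, and $V_i$ the corresponding Pauli corrections applied at $B$. Plugging this into the definition of $F_e$ with a maximally entangled purification on a reference register, one invokes the transpose trick $(V_i \otimes \id)\ket{\Phi^+} = (\id \otimes V_i^{T})\ket{\Phi^+}$ to push every Pauli correction on $B$ over to the reference, where it cancels with its conjugate. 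The remaining sum $\sum_i \dens{\Phi_i}$ collapses to the identity by completeness of the Bell basis, and what is left is exactly $\bra{\Phi^+}\rho_{AB}\ket{\Phi^+}$.

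The identical computation applied to $\LBA$, with the roles of the two sides exchanged, yields $F_e(\LBA) = \bra{\Phi^+}\rho_{BA}\ket{\Phi^+}$. To conclude, I would observe that $\ket{\Phi^+} = d^{-1/2}\sum_k \ket{k}\ket{k}$ is manifestly invariant under swap of its two subsystems, so $\bra{\Phi^+}\rho_{AB}\ket{\Phi^+} = \bra{\Phi^+}\rho_{BA}\ket{\Phi^+}$ whenever $\rho_{AB}$ and $\rho_{BA}$ denote the same pre-shared bipartite state with the subsystem labels merely reordered, which is the scenario set up in the paragraph preceding the lemma. The Horodecki--Nielsen identity then delivers $F(\LAB) = F(\LBA)$. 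The principal obstacle I anticipate is purely bookkeeping in the middle step: one must carefully track which Hilbert-space factor each operator (Bell projector, Pauli correction, partial trace) acts on while applying the transpose trick, but this is a textbook manipulation in the teleportation literature and introduces no conceptual difficulty beyond careful index tracking.
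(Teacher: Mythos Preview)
Your proposal is correct and follows essentially the same route as the paper: reduce $F$ to the entanglement fidelity via the Nielsen relation, identify the entanglement fidelity of the teleportation channel with the singlet fraction $\bra{\Phi^+}\rho_{AB}\ket{\Phi^+}$, and then use swap-invariance of the singlet fraction. The only difference is that you sketch a direct derivation of the entanglement-fidelity $=$ singlet-fraction identity, whereas the paper simply cites it from Horodecki et al.
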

\begin{proof}
We begin by noting that the average fidelity of any quantum channel $\Lambda$ is related to its entanglement fidelity $\bra{\Phi}\id\otimes \Lambda(\Phi)\ket{\Phi}$ where $\Phi$ is the maximally entangled state, as (from \cite{nielsen2002simple})
\begin{equation}\label{eq:av_to_ent_fid}
F(\Lambda) = \frac{d (\bra{\Phi}\id\otimes \Lambda(\Phi)\ket{\Phi}) +1}{d+1} = \frac{d F_e(\Lambda) +1}{d+1}.
\end{equation}
Next we use a result from \cite[equation 25]{horodecki1999general} stating that the entanglement fidelity of a channel $\LAB$ induced by teleportation (with perfect local operations) with a state $\rho_{AB}$ is equal to the singlet fraction $F_s(\rho_{AB}) = \bra{\Phi} \rho_{AB}\ket{\Phi}$ of the state $\rho_{AB}$. Similarly we have that $F_e(\LBA) = F_s(\rho_{BA})$. Now noting that the singlet fraction is invariant under the interchange of A and B we have $F_e(\LAB) = F_s(\rho_{AB}) = (F_s(\rho_{BA}) =F_e(\LBA) $ and thus $F(\LAB) = F(\LBA)$, which proves the lemma.

\end{proof}
In this case we can thus connect the network fidelity $f$, as measured by the $2$-node protocol, to the average fidelity of the network links $A\to B$ and $B\to A$ (assuming negligible contributions from local noise). We have
\begin{align}
\sqrt{f}  &= \sqrt{f (\LAB )f (\LBA)} = f(\LAB) = f(\LBA) \notag\\&= \frac{dF_{\mathrm{avg}}(\LAB) -1}{d-1},
\end{align}
where we used \cref{eq:fid_to_dep}.

\section{Simulation results}\label{sec-sim}
\begin{figure*}
\centering
\includegraphics[scale = 0.55]{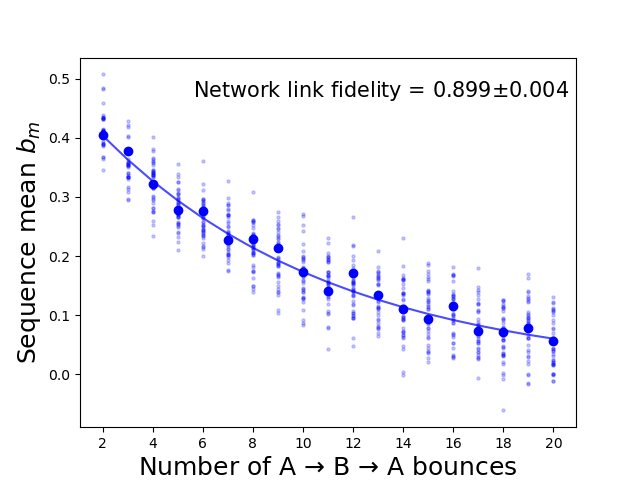}\hspace{0em}\includegraphics[scale=0.55]{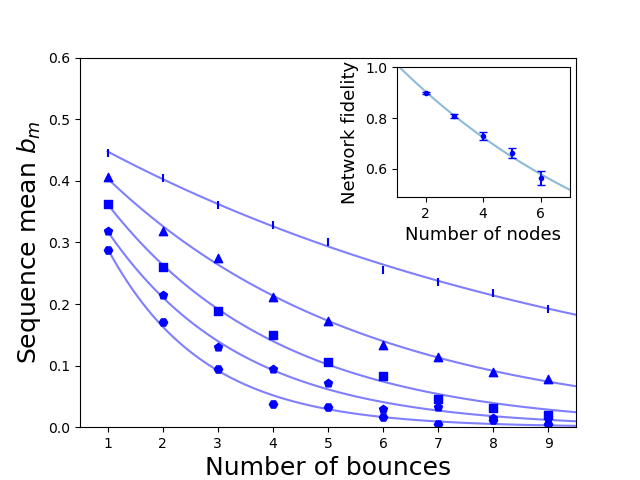}
\caption{\textbf{(a)}: Simulation in Netsquid of the two node network benchmarking protocol (\cref{fig:protocol2nodes}). The nodes $A$ and $B$ hold qubits afflicted by dephasing $(T2)$ errors as seen in NV-center quantum processors~\cite{bradley2019ten}. The channels $\LAB$ and $\LBA$ emulate teleportation with a noisy quantum state of the form \cref{eq:tel_noise}, with a bright state population of $\alpha=0.95$. \textbf{(b)}: Simulation in Netsquid of the multi-node network benchmarking protocol (going from two (line markers) to six (hexagon markers) nodes in a linear configuration). The noise models for network links and node operations are as before. We observe that the network fidelity decays exponentially with the number of nodes. }\label{fig:sim_results}
\end{figure*}
In this section we discuss the results of a simulation of network benchmarking on a model network using the quantum network simulator NetSquid~\cite{netsquid}. Netsquid is an advanced discrete event simulator that allows for the testing of quantum network properties in realistic circumstances, taking into account noisy operations and state preparation and measurement errors, but also issues specific to networks such as delay-induced decoherence, packet loss, and protocol timing issues. The code that generates the results below can be found at~\cite{zenodolink}. The goal of this section is to show how network benchmarking can be applied in practice. To this end we have constructed two different simulations inspired by real world scenarios. The first simulation investigates the behavior of $2$-node network benchmarking in a scenario where two network nodes are connected by network links implemented by teleportation, and the second investigates the use of the multi-node protocol as an efficient method to detect the decay of quality as the number of nodes in a path increases. For both these simulations we choose the local gate set to be the single qubit Clifford group $\mathbb{C}_1$. For both simulations we also specify a noise model that is an abstracted and simplified version of the noise present in networks based on NV-centres~\cite{humphreys2018deterministic}. We however emphasize that our intent is not to produce a detailed physical simulation of networks of this form (we do not take into account e.g. waiting times and non-deterministic entanglement creation), but rather to gain intuition for the behaviour of the network benchmarking protocol. Specifically, in both simulations we will model physical $T1$ and $T2$ noise affecting qubits in the network nodes, model the quantum network links with a quantum channel emulating state transfer through teleportation (explained in more detail below), and omit other imperfections.
\subsection{Teleportation mediated link between two nodes}
\Cref{fig:sim_results} (a) show the outputs of 2-node network benchmarking, as simulated in Netsquid. The links connecting $A$ and $B$ are here modeled by teleportation using a noisy entangled state of the form
\begin{equation}\label{eq:tel_noise}
\rho_{AB} = \alpha |\Phi\rangle\!\langle \Phi| + (1-\alpha)|00\rangle\!\langle 00|,
\end{equation}
where $\ket{\Phi}$ is again the maximally entangled state and $\alpha$ is the bright state population of the qubit at the NV network node before entanglement generation.
This state arises as a well-motivated model of single-photon heralded entanglement generation in NV centers~\cite{humphreys2018deterministic}. For our simulation we choose a bright state population of $\alpha=0.95$ (slightly different from the value in \cite{humphreys2018deterministic}). Moreover we model the qubits in the local nodes as being afflicted by standard $T_2$ dephasing noise, with relevant values for ${ }^{13}C$ memory qubits in NV-center quantum processors being $T_2 =12$ms (we technically also include $T1$ amplitude damping noise, however this is not a critical factor in NV centers~\cite{bradley2019ten}). Correspondingly we assume that applying native quantum operations on these memory qubits takes $39\mu$s (see \cite[figure $5$]{bradley2019ten} for the above numbers). We note that since some of the gates in the single qubit Clifford group must be compiled out of native operations this is not a gate-independent noise model. The data in \cref{fig:sim_results} is generated by running the $2$-node network benchmarking protocol for $40$ random sequences for each number of bounces $m$ (ranging from $1$ to $20$). Netsquid tracks density matrices, and can thus calculate the mean outcome for a random sequence directly. We add Gaussian noise to the data to simulate shot noise for $4000$ measurements per random sequence. The mean outcome for each random sequence of local gates is shown in light-blue, and the average over all sequences is shown in dark blue. From the exponential decay fit we obtain $f = 0.899 \pm 0.004$ ($95\%$ Studentized confidence interval from the fit) which is in line with a fidelity dominated by the quality of the teleportation procedure.

\subsection{Teleportation mediated links between multiple nodes nodes}
\Cref{fig:sim_results} (b) show the outputs of multi-node network benchmarking, as simulated in Netsquid. In this simulation we performed multi-node network benchmark on $n$ nodes in a linear configuration, where $n$ ranges between $2$ and $6$, with the links connecting the node modeled again by teleportation using the same parameters as before. The data in \cref{fig:sim_results} (b) is generated by running the $n$-node network benchmarking protocol for $40$ random sequences for each number of bounces $m$ (ranging from $1$ to $9$). From this we can infer that the network fidelity decreases from $0.899\pm 0.04$ at two nodes (line markers), to $0.56\pm 0.02$ at six nodes (hexagon markers, $95\%$ Studentized confidence interval from the fit). We note that empirically the network fidelity decays exponentially with the number of nodes (see inset in \cref{fig:sim_results} (b)). This points to a potential use of network benchmarking as a network discovery tool, in this case to give heuristic estimates of upper limits on the distance quantum information can travel through a network before degrading, without having to necessarily explore the whole network. 

\section{Statistics of network benchmarking}\label{sec:stat-anal}
In this section we analyze the finite sampling properties of the network benchmarking protocol. This analysis will resemble earlier statistical analyses of standard randomized benchmarking \cite{helsen2019multiqubit,harper2019statistical}, with one key difference. In standard analyses the accuracy of the fidelity estimate is given as a function of the number of measurements that must be performed. This ignores that some measurements might be more expensive to perform than others. In particular, one typically assumes that it is not more costly to obtain a sample from a long sequence of gates than it is to obtain a sample from a short sequence of gates.\\

In network benchmarking however, this assumption is no longer reasonable, as the cost of transmitting a qubit over a long distance will be the dominant factor in the cost of a sample. Hence sampling a sequence containing $m$ bounces (as specified in \cref{fig:protocol2nodes} and \cref{fig:protocolmultinodes}) will be approximately $m$ times as expensive as sampling a sequence containing only one bounce. This means it is more appropriate to estimate accuracy of the fidelity estimate produced by \cref{fig:protocol2nodes} as a function of the number of bounces. Taking this cost into account has strong consequences for the statistical properties of network benchmarking. In particular we will argue that we can not achieve `multiplicative accuracy' for the estimation of fidelity when taking the number of state transmissions as a cost metric. However, as seen in the simulations in \cref{sec-sim}, network benchmarking achieves good statistical accuracy for a reasonable resource use in practice. Moreover in the immediate future network fidelities are expected to be reasonably low (in the $90 {\raise.17ex\hbox{$\scriptstyle\mathtt{\sim}$}}
 99\%$ regime), so additive, and not multiplicative, accuracy is enough for practical purposes. 

\subsection{Relative accuracy estimation}
One of the main selling points of standard randomized benchmarking is its ability to estimate the infidelity $r= 1-f$ where $f$ is the depolarizing fidelity measured by randomized benchmarking, to \emph{multiplicative precision}. This means the estimator $\hat{r}$ is distributed around its true value $r$ with variance $O(r^2)$ \cite{helsen2019multiqubit,harper2019statistical}, which means that estimation in the high fidelity regime ($r<<<1$) is not more costly than estimation in the low fidelity regime. We will argue here that this behavior is critically dependent on the assumption that the cost of obtaining samples from a given gate sequence in a (network) benchmarking experiment is independent of the sequence length. As discussed above this is a reasonable assumption for standard randomized benchmarking but not so much for network benchmarking. We point out however that the argument below works just as well for standard randomized benchmarking if one takes the number of gates implemented as a cost function (as opposed to the number of samples collected). The argument below is not strictly rigorous as we will be making standard statistical assumptions such as normality of distributions, but we expect it can be made rigorous with sufficient work. \\

In $2$-node network benchmarking we can define the network infidelity as $r  = 1-f$. Network benchmarking constructs an estimator $\hat{r}$ for $r$ by sampling the decay function $Af^m$ for different sequence lengths $m$ and then fitting an exponential through the resulting averages. Without loss of generality we can assume the parameter $A$ to be known, as perfect knowledge of a parameter in an estimation problem will never increase the difficulty of estimating another parameter. Now our goal is to give a lower bound on the estimation cost of $f$, given samples from distributions $D(f, m)$ with mean $A f^m$ and variance $V(f,m)$. This distribution $D(f,m)$ is the distribution sampled by executing steps $2-17$ in \cref{fig:protocol2nodes}. We will make an argument using the Cramer-Rao bound, which states that the variance of any unbiased estimator of $f$ must be larger than the inverse of the Fisher information, defined as
\begin{equation}
I(f) = \frac{Af^{2m-2}m^2}{\md{V}(f,m)},
\end{equation}
for some fixed $m$, where we assumed that $D(f,m)$ is a Gaussian. This is a reasonable assumption since $D(f,m)$ is defined as the distribution of the mean of many independent random variables. 
 The central parameter that determines the Fisher information and thus the estimation cost is the variance of $\md{V}(f,m)$ of $D(f,m)$. By the law of total variance we decompose the variance of $D(f,m)$ into three contributions
\begin{equation}
\md{V}(f,m) = \md{V}_g(f,m) + \md{V}_{meas}(f,m) + \md{V}_{diff}(f,m),
\end{equation}
where $\md{V}_g(f,m)$ is the variance due to the randomness in selecting a sequence of gates, $\md{V}_{meas}$ is the variance due to the estimation of the probability $p(\vec{G})$ (this is often called shot-noise in the experimental literature) and $\md{V}_{diff}$ is the variance associated to the random choice of Pauli operator at the end of each sequence. $V_g(f,m)$ and $V_{meas}(f,m)$ have dependencies on both $f$ and $m$, making analysis difficult. However we can lower bound both by zero (which never makes the inference task harder) and state that $V(f,m)\leq \md{V}_{diff}(f,m)=1/(4\cdot2)$, where the factor of $1/2$ is due to the division by one half in step $17$ of the 2-node protocol \cref{fig:protocol2nodes}. We can thus upper bound the Fisher information $I(f,m)$ of $f$ in the distribution $D(f,m)$ as
\begin{equation}
I(f,m) \leq 8A^2 m^2 f^{2m-2}.
\end{equation}
This is the Fisher information associated to a fixed sample. We can consider the Fisher information associated to the sampling cost (which grows linearly with the sequence length $m$) by dividing by $m$, to get
\begin{equation}
I_{cost}(m,f) \leq 8A^2 m f^{2m-2}.
\end{equation}
Now given that we want to lower bound the variance of the estimator we are interested in the maximum of $I_{cost}(m,f)$ over $m$. It can be easily seen that this function has a unique maximum at $m = \frac{-1}{2\log(f)}$. This means the maximal Fisher information is
\begin{equation}
I_{cost,max}(f) \leq \frac{-4A^2}{log(f)} f^{1/\log{f} -2}.
\end{equation}
Writing $f = 1-r$ and writing out the Mercator series for the logarithm we can see now that $I_{cost,max}(f) = O(r)$. This implies through the Cramer-Rao bound that 
\begin{equation}
\md{V}(\hat{r}) = O(r),
\end{equation}
providing additive, but not relative, estimation accuracy. 

\section{Conclusion}
In this paper we have presented the network benchmarking protocol, a robust and efficient tool for assessing the quality of network links between nodes in a quantum Internet. We gave two version of the protocol, a $2$-node version, analyzing the quality of a single connection, and a multi-node version, analyzing the quality of a path of nodes in a network. We gave a mathematical analysis of these protocols, arguing that under some assumptions they output a quantity related tot the average fidelity of the quantum channels modeling the network links. We also argued that for a standard class of network link models, namely noisy quantum teleportation, the network fidelity can be exactly related to the average fidelity of the link. We supplemented this theoretical work with numerical simulations using the quantum network simulator Netsquid. From these simulations we saw that network benchmarking works well in realistic environments. A natural next step would be to implement the network benchmarking protocol in real quantum networks, which are currently in development. On the theoretical side it would be interesting to investigate further the use of network benchmarking as a tool for network discovery, integrating it as a subroutine in online routing algorithms for quantum networks~\cite{chakraborty2019distributed}, which will have to take the quality of a network link into account when making routing decisions. 

\begin{acknowledgments}
JH would like to thank Bas Dirkse, Harold Nieuwboer, and Michael Walter for fruitful conversations. This work was funded by an ERC Starting Grant (S.W.) and the NWO Quantum Software Consortium.
\end{acknowledgments}

\bibliographystyle{unsrt}

\bibliography{telbench}
\end{document}